\documentclass[
  english,
  abstract=true]{scrartcl}

\usepackage{amsmath,amssymb}
\usepackage{hyperref}
\usepackage{enumerate}
\usepackage{amsthm}
\usepackage{cleveref}
\usepackage{siunitx}
\usepackage{tabularx}
\usepackage{booktabs}

\usepackage{iftex}
\ifPDFTeX
  \usepackage[T1]{fontenc}
  \usepackage[utf8]{inputenc}
  \usepackage{textcomp} 
\else 
  \usepackage{unicode-math} 
  \defaultfontfeatures{Scale=MatchLowercase}
  \defaultfontfeatures[\rmfamily]{Ligatures=TeX,Scale=1}
\fi

\usepackage{lmodern}
\ifPDFTeX\else
\fi
\IfFileExists{upquote.sty}{\usepackage{upquote}}{}
\IfFileExists{microtype.sty}{
  \usepackage[]{microtype}
  \UseMicrotypeSet[protrusion]{basicmath} 
}{}
\makeatletter
\@ifundefined{KOMAClassName}{
  \IfFileExists{parskip.sty}{%
    \usepackage{parskip}
  }{
    \setlength{\parindent}{0pt}
    \setlength{\parskip}{6pt plus 2pt minus 1pt}}
}{
  \KOMAoptions{parskip=half}}
\makeatother
\usepackage{xcolor}
\usepackage{graphicx}
\makeatletter
\def\maxwidth{\ifdim\Gin@nat@width>\linewidth\linewidth\else\Gin@nat@width\fi}
\def\maxheight{\ifdim\Gin@nat@height>\textheight\textheight\else\Gin@nat@height\fi}
\makeatother
\setkeys{Gin}{width=\maxwidth,height=\maxheight,keepaspectratio}
\makeatletter
\def\fps@figure{htbp}
\makeatother
\setlength{\emergencystretch}{3em} 

\setcounter{secnumdepth}{5}
\usepackage[style = authoryear, backend=biber, maxbibnames=999, maxcitenames=2,
            doi=true, isbn=false, eprint=false]{biblatex} 
\addbibresource{bibliography.bib}

\ifPDFTeX
  \usepackage{libertine}
  \usepackage{libertinust1math}
  \usepackage[scaled=0.83]{beramono}
\else 
  \usepackage{fontspec}
  \setmainfont{Linux Libertine O}
  \setsansfont{Linux Biolinum O}
  \setmathfont[Scale=MatchUppercase]{libertinusmath-regular.otf}
\fi
\usepackage{csquotes}

\ifLuaTeX
  \usepackage{selnolig}  
\fi
\usepackage{bookmark}
\IfFileExists{xurl.sty}{\usepackage{xurl}}{} 
\urlstyle{same}
\hypersetup{
    pdftitle={Utility-based optimization of Fujikawa's basket trial design -- Pre-specified protocol of a comparison study},
      hidelinks,
  pdfcreator={LaTeX via pandoc}}
  
\AtEveryBibitem{\clearfield{month}}
\AtEveryCitekey{\clearfield{month}}
\AtEveryBibitem{%
  \clearlist{language}
  \ifentrytype{article}
    {\clearfield{url}\clearfield{urlyear}\clearfield{note}}
    {}%
  \ifentrytype{unpublished}
    {\clearfield{url}\clearfield{urlyear}}
    {}%
}
\crefname{thm}{theorem}{theorems}
\crefname{defn}{definition}{definitions}
\crefname{lem}{lemma}{lemmas} 
\crefname{cor}{corollary}{corollaries} 
\crefname{conj}{conjecture}{conjectures}
\crefname{prop}{proposition}{propositions}
\crefname{alg}{algorithm}{algorithms}
\theoremstyle{definition}
\newtheorem{thm}{Theorem}[section]

\newtheorem{prop}[thm]{Proposition}

\title{Utility-based optimization of Fujikawa's basket trial design --
Pre-specified protocol of a comparison study}
\date{May 17, 2024}

\usepackage[noblocks]{authblk}

\author[1] {Lukas D Sauer}
\author[2] {Alexander Ritz}
\author[1] {Meinhard Kieser}
\affil[1] {Institute of Medical Biometry, Heidelberg University,
Germany}
\affil[2] {Institute of Mathematics, Clausthal University of Technology,
Germany}

\ifPDFTeX
  \usepackage{upgreek}
  \newcommand\symbfupphi{\boldsymbol{\upphi}}
\else 
  \newcommand\symbfupphi{\symbfup{\phi}}
\fi
\DeclareMathOperator{\JSD}{JSD}
\DeclareMathOperator{\KLD}{KLD}
\DeclareMathOperator{\HLD}{HLD}

\usepackage{booktabs}
\usepackage{caption}
\usepackage{longtable}
\usepackage{colortbl}
\usepackage{array}

\begin{document}

\maketitle

\begin{abstract}
Basket trial designs are a type of master protocol in which the same
therapy is tested in several strata of the patient cohort. Many basket
trial designs implement borrowing mechanisms. These allow sharing
information between similar strata with the goal of increasing power in
responsive strata while at the same time constraining type-I error
inflation to a bearable threshold. These borrowing mechanisms can be
tuned using numerical tuning parameters. The optimal choice of these
tuning parameters is subject to research. In a comparison study using
simulations and numerical calculations, we are planning to investigate
the use of utility functions for quantifying the compromise between
power and type-I error inflation and the use of numerical optimization
algorithms for optimizing these functions. The present document is the
protocol of this comparison study, defining each step of the study in
accordance with the ADEMP scheme for pre-specification of simulation
studies.
\end{abstract}

\section{Introduction}\label{introduction}

With the dawn of precision medicine and targeted antibody therapies, the
wish for more flexible trial designs compared to randomized controlled
trials has been emphasized in both clinical research and methodology.
While randomized controlled trials are still the gold standard of
clinical research owing to their high internal validity, in some
contexts they may be costly, slow, unethical, or simply not feasible.
The term \emph{master protocols} summarizes more flexible trial designs
that combine features such as the addition of promising new treatment
arms, the removal of ineffective treatment arms, and the combination of
arms with different endpoints. A commonly requested idea is testing the
same treatment in several substrata of a patient cohort. Such a master
protocol, i.e., a design that unifies several strata in a single
clinical trial, is called \emph{basket trial design}. This unification
streamlines the planning phase of the different strata and parallelizes
their recruitment, resulting in an efficient use of resources. Basket
trial designs are beginning to be picked up in practice. A systematic
literature review conducted on February 20, 2023, in MEDLINE, Embase,
and the Cochrane Central Register of Controlled Trials found 146
oncology-related basket trials \parencite{kasim_basket_2023}. Especially
in early stages of clinical research, these designs also offer
statistical benefits: So-called \emph{borrowing} techniques allow strata
with similar responses to the treatment to share information with one
another, thereby leveraging power despite small sample sizes while
keeping type-I error rates only moderately inflated. The earliest
publication known to us that suggests borrowing between strata was
published in 2003 \parencite{thall_hierarchical_2003} and the last ten
years showed a colorful bouquet of Bayesian and frequentist borrowing
mechanisms being introduced to methodological research
\parencite[see][]{pohl_categories_2021}.

Borrowing usually depends on the choice of several tuning parameters. An
optimal choice of tuning parameters has to offer a compromise between
multiple components: achieving high power in responsive strata, keeping
type-I error rates low in unresponsive strata and maintaining a good
balance between these two measures across several response scenarios.
This compromise can be quantified using utility functions. The optimal
choice of tuning parameters can then be found by finding optimal utility
function values with the help of numerical optimization algorithms. In
the following document, we present the protocol of a comparison study
planned to investigate utility-based optimization of basket trial
designs using both simulations and numerical calculations. The basket
trial design that we are considering as an example is a Bayesian design
introduced by \cite{fujikawa_bayesian_2020}.

\section{Methodology of utility functions in basket trial
designs}\label{methodology-of-utility-functions-in-basket-trial-designs}

In the statistical planning of clinical trials, the communication of
type-I error rate (TOER) and power to stakeholders such as principal
investigators, sponsors, ethical committees and regulatory authorities
is essential. TOER is the probability of rejecting the null hypothesis
conditional on the assumption that the null hypothesis is true. Power is
the probability of rejecting the null hypothesis conditional on the
assumption that some alternative hypothesis of interest holds. While
these measures are purely frequentist in nature, they may also be
requested during the planning of Bayesian trial designs.

In the planning of basket trial designs, this demand for TOER and power
calculation is confronted with several challenges. Firstly, every
stratum may have its own null and alternative hypothesis of interest so
that a multitude of combinations of null and alternative hypotheses
across baskets can be considered. Secondly, control of TOER in a
scenario may not be possible if we want to employ borrowing in order to
leverage power. \cite{kopp-schneider_power_2020} proved that power
increase always comes at the cost of TOER inflation in the context of
borrowing from external data sources whenever a uniformly most powerful
test exists. While we are not aware of a formal transfer of their
argument to the context of basket trials, it is plausible that the
argument holds in that setting as well.

In communication with stakeholders, the best practice may be to
communicate both TOER and power for each stratum across a range of
plausible scenarios combining null and alternative hypotheses in
different strata.

However, when searching for the optimal choice of tuning parameters of a
basket trial designs we need to combine TOER and power across strata and
scenarios. A natural way of combining these values is by defining an
appropriate utility function. Then, optimization algorithms can be
employed in order to find the optimal tuning parameter vector.
Constraints, e.g.~the maximally tolerated TOER values, may either be
incorporated into the utility function allowing for unconstrained
optimization methods or may be set up as separate inequalities, asking
for constrained optimization approaches. This utility-based optimization
approach has already been employed in the context of adaptive designs
\parencite{pilz_optimal_2021} and in the context of planning several
stages of drug development \parencite{kirchner_utilitybased_2016} even
though formal control of TOER is possible in these contexts. In the
methodology of basket trial designs, we are aware of a first
utility-based approach presented in \cite{jiang_optimal_2021}. Their
approach will also be considered in the presented comparison study
protocol and will be supplemented by several other optimization
algorithms and utility functions.

\section{Goals of this comparison study}

This comparison study's goal is to find optimal tuning parameter
combinations for Fujikawa's basket trial design in a general framework
that could subsequently be generalized to other basket trial designs as
well. In particular, the study is divided into three parts addressing
three questions related to finding optimal parameter combinations:

\begin{enumerate}[I.]
\item Which type of optimization algorithm for finding the optimal tuning parameter vector $\symbfupphi^*$ should be preferred in terms of runtime and reliability?
\item What is a good definition of \emph{optimal} tuning parameter vector $\symbfupphi^*$  that takes the desire for maximizing the detection probability of active strata as well constraints on TOER into account while delivering favorable results across a range of outcome scenarios? This question amounts to finding an appropriate utility function. We will use the best algorithm found in Part I, and apply it to a variety of different utility functions.
\item How does the optimal tuning parameter vector $\symbfupphi^*$  found as a result of parts I and II perform in comparison to the tuning parameter combinations suggested in \cite{fujikawa_bayesian_2020}?
\end{enumerate}

The document will begin with a brief introduction into basket trial
designs in general and into Fujikawa's design in particular. Afterwards,
we will explain the true scenarios of interest in our study as well as
the considered utility function and optimization algorithms. Then we
will present the pre-specified plan of the three parts of the study
structured using the ADEMP scheme by \cite{morris_using_2019}. Part I is
a simulation study for comparing different algorithms, parts II and III
are comparison studies where all measures of interest can be calculated
exactly.

\section{Basket trial designs}

A basket trial design is a clinical trial design used primarily in
oncological single-arm phase II studies. It tests the same null
hypothesis in several strata. In the literature, either the ensemble of
all strata together is called \enquote{basket} or the strata themselves
are called \enquote{baskets}. In the following, the endpoint will always
be binary. While this could be any binary endpoint, we will without loss
of generality refer to response to a treatment vs.~no response
throughout the text. Strata that have a sufficiently high true response
rate to the treatment are called \emph{active}, otherwise they are
called \emph{inactive}.

Consider the design from \cite{fujikawa_bayesian_2020}, which is based
on an alteration of the beta-binomial model. Using the notation from
\cite{pohl_categories_2021}, it can be defined as follows: Let \(n_i\)
resp. \(r_i\) be the number of patients resp. responders in stratum
\(i\in 1, \ldots, I\) for some number of strata \(I\). The sampling
distribution is simply the binomial distribution
\[r_i \sim \mathrm{Bin}(n_i, p_i),\] where the true rate \(p_i\) follows
a prior beta distribution with shape parameters \(a_i, b_i > 0\),
\[p_i \sim \mathrm{Beta}(a_i, b_i).\] For analysis of data, Fujikawa et
al.~recommend an uninformative choice of prior distribution. We choose
\(a_i=b_i=1\). By conjugacy, the usual posterior distribution would be
\begin{equation}
\label{eqn:beta-posterior}
p_i \sim \mathrm{Beta}(a_i+r_i, b_i+n_i-r_i)=\mathrm{Beta}_i^{\text{post}}.
\end{equation} Now this posterior is altered by introducing a borrowing
mechanism, \begin{equation}
\label{eqn:borrowing-posterior}
p_i \sim \mathrm{Beta}(\textstyle\sum_j\omega_{ij}\cdot(a_j+r_j),\textstyle\sum_j\omega_{ij}\cdot(b_j+n_j-r_j))=\mathrm{Beta}_i^{\text{bor}}.
\end{equation} Here \(\omega_{ij}\) is a similarity measure defined by
\(\omega_{ij} = \mathbf{1}(\tilde\omega_{ij}^\varepsilon>\tau)\cdot\tilde\omega_{ij}^\varepsilon\),
where we have \(\tau\in[0,1]\), \(\varepsilon \geq 0\), and
\(\tilde\omega_{ij}= 1 - \JSD(\mathrm{Beta}_i^{\text{post}}, \mathrm{Beta}_j^{\text{post}})\)
with the Jensen-Shannon divergence \(\JSD\) of the unaltered
beta-binomial posterior distributions from
\Cref{eqn:beta-posterior}.\footnote{Note that in the original publication \parencite{fujikawa_bayesian_2020}, the bound for $\varepsilon$ is set to $\varepsilon\geq 1$. However, there is no mathematical or design-related reason to not allow values between 0 and 1.}
\(\JSD\) is a measure of divergence of probability distribution which
implies that \(\omega_{ij}\) becomes a measure of similarity of
probability distribution, which is set to \(0\) if the similarity is
less or equal to \(\tau\). The Jensen-Shannon divergence is defined as
\[ \JSD(P, Q)=\frac{1}{2}(\KLD(P, M) + \KLD(Q, M)),\] where
\(M=\frac{1}{2}(P+Q)\) is the mixture distribution of \(P\) and \(Q\)
\parencite[see][]{fujikawa_bayesian_2020}. Here, \(\KLD\) is the
Kullback-Leibler divergence defined as
\[ \KLD(P, Q)=\int_{\mathcal{X}}P(x)\log\left(\frac{P(x)}{Q(x)}\right)\mu(dx), \]
where \((\mathcal X, S, \mu)\) is the probability space on which \(P\),
\(Q\) or \(M\) are defined \parencite[see][]{kullback_information_1951}
and \(P(x)\) denotes the Radon-Nikodym derivative with respect to
\(\mu\), i.e.~the probability density function in case \(\mu\) is chosen
to be the Lebesgue measure. Here, \(\log(\cdot)\) denotes the natural
logarithm. We use this logarithm for better comparability with
Fujikawa's results where the natural logarithm is used as well. In
\cite{baumann_basket_2023}, the logarithm with base 2 is used as it
implies that the Jensen-Shannon divergence ranges from 0 to 1.

The test decision whether stratum \(i\) is \emph{detected} as active is
based on the posterior probability of lying above a desired target rate
\(p_i^*\), i.e., \begin{equation}
\label{eq:detect-active}
P(p_i>p^*_i|\mathbf{r})\geq \lambda,
\end{equation} where \(\mathbf{r}=(r_i)_i\) is the vector of responses
and where the posterior probability \(P(\cdot|\mathbf{r})\) is defined
with respect to the borrowing posterior
\(\mathrm{Beta}_i^{\text{bor}}\).

We denote by \(\symbfupphi = (\lambda, \varepsilon, \tau)\) the vector
of tuning parameters. In the examples of \cite{fujikawa_bayesian_2020},
the shape parameter \(\varepsilon\), the similarity cutoff \(\tau\) and
the detection threshold \(\lambda\) are chosen to be either
\(\symbfupphi = (\lambda, \varepsilon, \tau) = (0.99, 2, 0)\) or
\((0.99, 2, 0.5)\), but it is unclear whether this is the optimum for
choosing the tuning parameters.

Concerning the relationship between \(\varepsilon\) and \(\tau\), one
should note two things. Firstly, they are redundant when it comes to
defining the minimal similarity for which borrowing is still allowed.
Indeed, for a given minimal similarity \(\tilde\omega^*\) and a given
\(\tau\), we can choose
\(\varepsilon_{\tilde\omega^*}(\tau)=\log_{\tilde\omega^*}(\tau)\) (and
analogously
\(\tau_{\tilde\omega^*}(\varepsilon)=(\tilde\omega^*)^\varepsilon\))
such that the function
\(\omega_{ij} = \mathbf{1}(\tilde\omega_{ij}^\varepsilon>\tau)\cdot\tilde\omega_{ij}^\varepsilon\)
\(\max_{r_k\neq r_l}\tilde\omega_{kl}\) is greater than 0 if and only if
\(\tilde\omega_{kl}>\tilde\omega^*\).

Secondly, this implies that when choosing
\(\tilde\omega^* = \max_{r_i\neq r_j}\tilde\omega_{i,j}\) or greater,
then borrowing is only allowed for two substrata that have identical
response rates. We call this the \emph{extreme borrowing boundary}
\(\varepsilon_{\text{extreme}}(\tau)\). Increasing \(\varepsilon\) or
\(\tau\) above this boundary does not change the behavior as
\(\omega_{ij}\) is either equal to \(1\) in case \(r_i=r_j\) or equal to
\(0\) in case \(r_i\neq r_j\).

The parameter space upon which we will perform optimization in this
study will be \(\lambda\in[0, 1]\), \(\varepsilon\in[0, \infty)\) and
\(\tau\in[0, 1]\). For the grid search algorithm (explained below), we
will restrict ourselves to \(\varepsilon\in[0, 25]\). This means that
the parameter space will encompass the parameter suggestions by Fujikawa
et al.~and that we can study the behavior of the extreme borrowing
boundary for higher \(\tau\) values (approx. above \(\tau=0.6\)) as can
be seen in \Cref{fig:extreme-borrowing}.

\begin{figure}
\centering
\includegraphics[alt={The figure shows the extreme borrowing boundary of epsilon on the y axis plotted against tau on the x axis.}]{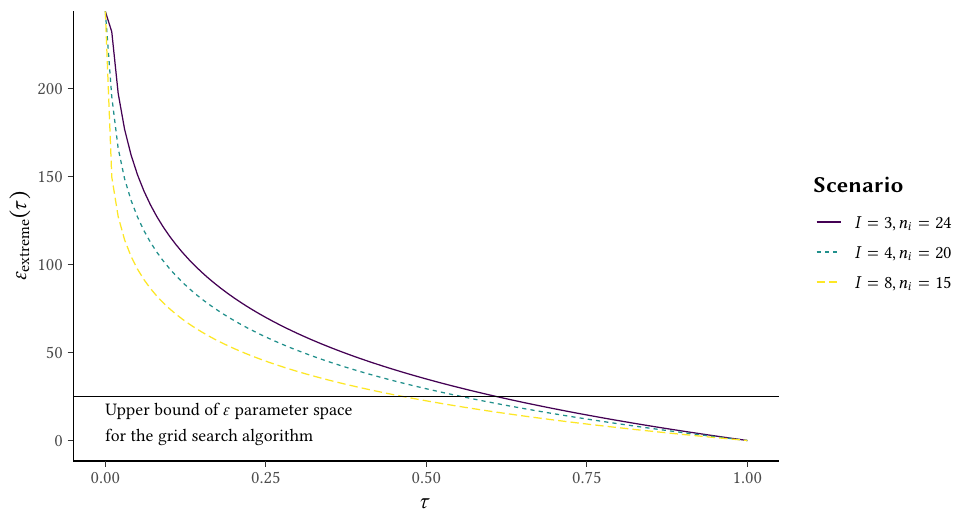}
\caption{\label{fig:extreme-borrowing}Extreme borrowing boundary $\varepsilon_{\text{extreme}}( \tau )$ for different stratum counts $I$ and per-stratum patient counts $n_i$ and the upper bound of the $\varepsilon$ parameter space for the grid search algorithm}
\end{figure}

\section{Outcome scenarios}
\label{sec:scenarios}

In order to study the performance of the tuning parameter combinations,
each step of the comparison study will consider the scenarios following
in the list below. Firstly, we are using scenarios from related
methodological publications, namely \cite{fujikawa_bayesian_2020} whose
design we are studying, \cite{baumann_basket_2023} whose underlying R
package we are using for calculation of performance measures, and
\cite{krajewska_new_2021} in anticipation of a future comparison of
frequentist and Bayesian sharing techniques. The number of strata in
these methodological publications vary between \(I=3\), 4 and 8 and the
per-stratum sample sizes range from \(n_i=15\) to \(n_i=24\). The
scenarios with low stratum and patient counts
\((I=3, n_i=24, p_0 = 0.2)\) and \((I=4, n_i=0, p_0 = 0.15)\) enable
exact calculation of performance measures (up to numerical precision of
integration methods) without the necessity for Monte Carlo methods.

In order to supplement these methodologically motivated scenarios by
clinically realistic scenarios, we considered the publications and
published supplementary material of four different systematic literature
reviews of planned and ongoing basket trials.
\cite{park_systematic_2019} and \cite{meyer_evolution_2020} investigated
planned and ongoing master protocols including basket trials in July
2019 and January 2020, respectively. While their search scopes and the
identified trials are partially overlapping, we considered both reviews:
\citeauthor{park_systematic_2019} found more basket trials than Meyer et
al., but on the other hand, \citeauthor{meyer_evolution_2020} presented
invidivual trial characteristics in the main text which made information
easy and safe to extract. \cite{kasim_basket_2023} mirrored
\citeauthor{park_systematic_2019}'s search strategy in February 2023,
focussing specifically on basket trials in oncology. Concerning search
date, it is the most up-to-date of the considered reviews. The umbrella
review by \cite{haslam_umbrella_2023} again searched basket trials in
oncology, albeit about a year earlier in March 2022. We still considered
their review as they report summary data of response rates in the
strata.

In \cref{tab:sys-rev-basket}, we summarized trial characteristics as
presented in the four systematic reviews. Median per-group sample size
was higher in \cite{meyer_evolution_2020} compared to
\cite{park_systematic_2019} and \cite{kasim_basket_2023}. As
\citeauthor{meyer_evolution_2020} reported the planned sample size, we
deem the median per-group sample size in \cite{park_systematic_2019} and
\cite{kasim_basket_2023} more realistic. Considering these summary data,
it is apparent that the total sample size and per-stratum sample size
are usually higher in actual clinical trials than in the scenarios taken
from above-mentioned methodological publications. Hence, we included
four more scenario sets that are similar to actual completed basket
trials registered on \url{ClinicalTrials.gov}. These four scenario sets
represent trials with medium and large total sample sizes and medium and
large per-stratum sample sizes.

Considering response rates, the active response rates in the
methodological publications are slightly larger than the 75\%-quantile
reported in the systematic review by \citeauthor{park_systematic_2019}.
The four realistic scenario sets complement these response rates by
smaller response rates and smaller effect sizes.

\begin{table}[p]
\caption{Characteristics of basket trials as found in the literature. Per-group no. of patients was calculated as the total number of patients divided by the number of subgroups. Data were taken from the publications of \cite{meyer_evolution_2020}, Table 1, and \cite{haslam_umbrella_2023}, Table 1, and from the supplementary online material  of \cite{park_systematic_2019}, Additional file 1, Tables S9 and S10, and of \cite{kasim_basket_2023}, Data Sheet 2.}
\label{tab:sys-rev-basket}
\centering

\setlength{\LTpost}{0mm}
\begin{longtable}{lcccc}
\toprule
 & \multicolumn{4}{c}{\textbf{Systematic review}} \\ 
\cmidrule(lr){2-5}
\textbf{Characteristic} & \textbf{Park 2019} & \textbf{Meyer 2020}\textsuperscript{\textit{1}} & \textbf{Haslam 2023}\textsuperscript{\textit{2}} & \textbf{Kasim 2023}\textsuperscript{\textit{3}} \\ 
\midrule\addlinespace[2.5pt]
No. of trials, N & 49 & 16 & 25 & 146 \\ 
Total no. of patients &  &  &  &  \\ 
    Median (25–75\%) & 205 (90–500) & 101 (95–481) & 48 (30–122) & 70 (35–145) \\ 
    Range & 12–6,452 & 71–11,000 &  & 0–1,609 \\ 
    Unknown & 1 & 1 &  & 0 \\ 
No. of strata &  &  &  &  \\ 
    Median (25–75\%) & 5.0 (3.5–6.5) & 3.5 (1.8–4.0) &  & 3.0 (1.0–5.0) \\ 
    Range & 2.0–27.0 & 1.0–7.0 &  & 1.0–24.0 \\ 
    Unknown & 14 & 0 &  & 115 \\ 
Per-stratum no. of patients &  &  &  &  \\ 
    Median (25–75\%) & 43 (30–71) & 93 (23–145) &  & 38 (18–61) \\ 
    Range & 17–514 & 14–2,750 &  & 6–172 \\ 
    Unknown & 14 & 1 &  & 115 \\ 
Response rate [\%] &  &  &  &  \\ 
    Median (25–75\%) &  &  & 23.1 (8–30) &  \\ 
    Unknown & 49 & 16 &  & 146 \\ 
\bottomrule
\end{longtable}
\begin{minipage}{\linewidth}
\textsuperscript{\textit{1}}Meyer et al. 2020 reported the planned sample size and planned number of subgroups.\\
\textsuperscript{\textit{2}}Only summarized data were available for Haslam et al. 2023.\\
\textsuperscript{\textit{3}}Kasim et al. 2023 reported the number of subgroups only for completed basket trials.\\
\end{minipage}

\end{table}

\begin{enumerate}
\item Three-stratum scenarios from \cite{fujikawa_bayesian_2020}: The number of strata is $I=3$, strata with true response rate $p_i=0.2$ are truly inactive, strata with $p_i>0.2$ are truly active. We will consider sample sizes of $n_i=24$ per stratum and the following combinations of true response rates:
\begin{enumerate}
\item 0 of 3 active strata: $\mathbf{p}=(0.2,0.2,0.2)$,
\item 1 of 3 active strata: $\mathbf{p}=(0.2,0.2,0.5)$,
\item 2 of 3 active strata: $\mathbf{p}=(0.2,0.5,0.5)$, and
\item 3 of 3 active strata: $\mathbf{p}=(0.5,0.5,0.5)$.
\end{enumerate}
\item Four-stratum scenarios from \cite{baumann_basket_2023}: The number of strata is $I=4$, strata with true response rate $p_i=0.15$ are truly inactive, strata with $p_i>0.15$ are truly active. We will consider sample sizes of $n_i=20$ per stratum and the following combinations of true response rates:
\begin{enumerate}
\item 0 of 4 active strata: $\mathbf{p}=(0.15, 0.15, 0.15, 0.15)$,
\item 1 of 4 active strata: $\mathbf{p}=(0.15, 0.15, 0.15, 0.4)$,
\item 2 of 4 active strata $\mathbf{p}=(0.15, 0.15, 0.4, 0.4)$,
\item 3 of 4 active strata $\mathbf{p}=(0.15, 0.4, 0.4, 0.4)$,
\item 4 of 4 active strata $\mathbf{p}=(0.4, 0.4, 0.4, 0.4)$,\footnote{In \cite{baumann_basket_2023}, these first five scenarios are called global null, good nugget, half, bad nugget and global alternative.}
\item one in the middle: $\mathbf{p}=(0.4, 0.4, 0.3, 0.5)$, and
\item linear: $\mathbf{p}=(0.15, 0.25, 0.35, 0.45)$.
\end{enumerate}
\item Eight-basket scenarios analogous to \cite{krajewska_new_2021}: The number of strata is $I=8$, strata with true response rate $p_i=0.15$ are truly inactive, strata with $p_i>0.15$ are truly active. We will consider sample sizes of $n_i=15$ per stratum. This is different from the exact sample size in  \cite{krajewska_new_2021}, where the sample size depends on a clustering decision made at interim. However, a sample size of $n_i=15$ is the worst case achieved when the clustering algorithm separates all strata. We will consider the following combinations of true response rates:
\begin{enumerate}[a) {-- i)}]
\setlength{\itemindent}{1em}
\item $a$ of 8 active strata: $\mathbf{p}=(0.15,\ldots, 0.15, \underbrace{0.45,\ldots, 0.45}_a)$ with $0\leq a\leq 8$.
\end{enumerate}
\item Scenario with medium total sample size and small effect sizes, similar to NCT02454972: In Table S5 of the supplementary Data Sheet 1 to \citeauthor{kasim_basket_2023}, the basket trial with above-mentioned \url{ClinicalTrials.gov} ID is summarized. It had $I=9$ strata with per-stratum sample sizes ranging from 13 to 105 (median: 23) and observed per-stratum response rates of $(0.056, 0.000, 0.113, 0.143, 0.043, 0.000, 0.286, 0.065, 0.362)$. Most strata had assumed null response rates of 0.01 (corresponding to standard of care) and targeted response rates of 0.1. In our comparison study, we will consider a simplified version of this trial: We will consider a trial with $I=9$ strata with $n_i = 23$ patients each. Strata with $p_i=0.01$ will be considered inactive, strata with $p_i > 0.01$ will be considered truly active. For optimization, we will then consider the following data scenarios:
\begin{enumerate}[a) {-- j)}]
\setlength{\itemindent}{1em}
\item $a$ of 9 active strata: $\mathbf{p}=(0.01,\ldots, 0.01, \underbrace{0.10,\ldots, 0.10}_a)$ with $0\leq a\leq 9$.
\end{enumerate}
When calculating the performance measures of the optimized parameter combination in part II of the comparison study, we will also consider performance measures assuming that the observed response rates from the clinical trial are the true response rates.
\item Scenario with large total sample size and large number of baskets, similar to NCT02054806: In \citeauthor{kasim_basket_2023}'s supplementary material, another  basket trial is summarized. It has  $I=20$ strata with per-stratum sample sizes ranging from 16 to 27 (median: 24) and assumed null and target response rates of 0.10 and 0.35, respectively. We will consider a simplified version with $I=20$ strata with $n_i=24$ patients each in which strata with $p_i=0.10$ and $p_i>0.10$ will be considered inactive and active, respectively. For optimization, we will then consider the following data scenarios:
\begin{enumerate}[a) {-- k)}]
\setlength{\itemindent}{1em}
\item $a$ of 20 active strata: $\mathbf{p}=(0.10,\ldots, 0.10, \underbrace{0.35,\ldots, 0.35}_a)$ with $a = 0, 2, 4, \ldots, 20$. 
\end{enumerate}
We will only consider even numbers in order to halve computation time. Again, we will also investigate the observed response rates
\begin{align*}(&0.160, 0.174, 0.120, 0.120, 0.167, 0.043, 0.130, 0.304, 0.080, 0.042,\\
&0.200, 0.259, 0.063, 0.115, 0.000, 0.174, 0.115, 0.333, 0.091, 0.056)
\end{align*}
reported for the actual clinical trial when calculating the performance measures.
\item Scenario with medium per-stratum sample size similar to NCT01848834: In \citeauthor{kasim_basket_2023}'s supplementary material, yet another clinical trial is summarized. It has $I=4$ strata with per-stratum sample sizes ranging from 32 to 60 (median: 36) and assumed null and target response rates of 0.10 and 0.35 in half of the baskets. We will consider a simplified version with $I=4$ strata with $n_i=36$ patients each in which strata with $p_i=0.10$ and $p_i>0.10$ will be considered inactive and active, respectively. For optimization, we will then consider the following data scenarios:
\begin{enumerate}[a) {-- e)}]
\setlength{\itemindent}{1em}
\item $a$ of 4 active strata: $\mathbf{p}=(0.10,\ldots, 0.10, \underbrace{0.35,\ldots, 0.35}_a)$ with $0\leq a \leq 4$.
\end{enumerate}
As before, when calculating the performance measures we will also investigate the observed response rates $(0.156, 0.167, 0.212, 0.205)$ reported for the actual clinical trial assuming that they were identical to the true response rates .
\item Scenario with large per-stratum sample size similar to NCT01631552: \citeauthor{kasim_basket_2023} also included the study with above-mentioned ID, whose details can be found on the website \url{ClinicalTrials.gov}, see \cite{gileadsciences_study_2024} in the references. The response to treatment was observed separately in $I=3$ strata in 45, 54 and 108 patients, respectively (median: 54). Assumed response rates used in the planning phase were neither available at \url{ClinicalTrials.gov} nor in the material provided by \citeauthor{kasim_basket_2023} Hence, we assumed a null response rate of 0.15 and a target response rate of 0.30. In particular, we will consider a simplified version with $I=3$ strata with $n_i=54$ per stratum in which strata with $p_i=0.15$ and $p_i>0.15$ will be considered inactive and active, respectively. For optimization, we will then consider the following data scenarios:

\begin{enumerate}[a) {-- d)}]
\setlength{\itemindent}{1em}
\item $a$ of 3 active strata: $\mathbf{p}=(0.15,\ldots, 0.15, \underbrace{0.30,\ldots, 0.30}_a)$ with $0\leq a \leq 3$.
\end{enumerate}

We will again investigate the observed response rates $(0.289, 0.315, 0.333)$ as a further data scenario when calculating the performance measures.
\end{enumerate}

\section{Utility functions}
\label{sec:ut-fun}

Under a given true scenario \(\mathbf{p}\) and a tuning parameter vector
\(\symbfupphi\), we define the power in a truly active stratum \(i\) as
the probability
\[ \mathrm{pow}_i(\symbfupphi,\mathbf{p})=P(i\text{ detected}|\symbfupphi,\mathbf{p}, i\text{ active})\]
and the type-I error rate (TOER) in a truly inactive stratum \(i\) as
the probability
\[ \mathrm{toer}_i(\symbfupphi,\mathbf{p})=P(i\text{ detected}|\symbfupphi,\mathbf{p}, i\text{ inactive}).\]
In this context, all probabilities are \enquote{frequentist}, meaning
that they are defined with respect to true binomial sampling
distributions \(r_i \sim \mathrm{Bin}(n_i, p_i)\) for all \(i\) without
being modeled on a prior distribution. However, the test decision
whether a stratum is detected to be active is made according to the
\enquote{Bayesian} borrowing posterior probability as defined in
\Cref{eq:detect-active}. This approach is sometimes called
\enquote{pragmatic Bayesianism}.

Let \(R\subseteq \{1,\ldots,I\}\) be the set of active strata with
respect to \(\mathbf{p}\), \(R^c\subseteq \{1,\ldots,I\}\) the set of
inactive strata. Then, we define the experiment-wise power (EWP) as the
probability
\[ \mathrm{ewp}(\symbfupphi,\mathbf{p}) = P(\exists i\in R: i\text{ detected}|\symbfupphi,\mathbf{p}).\]
Analogously, we define the family-wise error rate (FWER) as the
probability
\[ \mathrm{fwer}(\symbfupphi,\mathbf{p}) = P(\exists i\in R^c: i\text{ detected}|\symbfupphi,\mathbf{p}).\]
Finally, we define the expected number of correct decisions (ECD) as
\[ \mathrm{ecd}(\symbfupphi,\mathbf{p}) =  \sum_{i\in R}P(i\text{ detected}|\symbfupphi,\mathbf{p}) + \sum_{i\in R^c}P(i\text{ not detected}|\symbfupphi,\mathbf{p}).\]
Based on these functions, we define the following utility functions.
Across all definitions, \(\xi_1, \xi_2>0\) are penalty parameters set to
1 by default.

\begin{enumerate}
\item \label{it:u-ewp} Discontinuous family-wise power-error function \[u_{\text{ewp}}(\symbfupphi,\mathbf{p_1},\mathbf{p_2})=\begin{cases}\mathrm{ewp}(\symbfupphi,\mathbf{p_1}) & \text{if } \mathrm{fwer}(\symbfupphi,\mathbf{p_2}) < \eta_1,\text{ and}\\
      -\xi_1\cdot\mathrm{fwer}(\symbfupphi,\mathbf{p_2}) & \text{if } \mathrm{fwer}(\symbfupphi,\mathbf{p_2}) \geq \eta_1,\end{cases}\]
usually with $\mathbf{p_2}$ being the global null scenario and $\eta_1=0.05$.
\item  \label{it:u-ecd} Expected number of correct decisions \[u_{\text{ecd}}(\symbfupphi,\mathbf{p_1},\mathbf{p_2}) = \begin{cases}\mathrm{ecd}(\symbfupphi,\mathbf{p_1}), & \text{if } \mathrm{fwer}(\symbfupphi,\mathbf{p_2}) < \eta_1,\text{ and}\\
-\xi_1\cdot\mathrm{fwer}(\symbfupphi,\mathbf{p_2}) & \text{if } \mathrm{fwer}(\symbfupphi,\mathbf{p_2}) \geq\eta_1,\end{cases}\]
usually with $\mathbf{p_2}$ being the global null scenario.
\item  \label{it:u-2ewp} Two-level family-wise power-error function \[u_{\text{2ewp}}(\symbfupphi,\mathbf{p}) = \mathrm{ewp}(\symbfupphi,\mathbf{p}) - \left(\xi_1\mathrm{fwer}(\symbfupphi,\mathbf{p}) + \xi_2(\mathrm{fwer}(\symbfupphi,\mathbf{p}) - \eta_2)\mathbf 1(\mathrm{fwer}(\symbfupphi,\mathbf{p}) - \eta_2) \right),\]
where $\eta_2\in[0,1]$ is a threshold for imposing harder FWER penalty, set to 0.1 by default.
\item \label{it:u-2pow} Two-level stratum-wise power-error function \begin{align*} u_{\text{2pow}}(\symbfupphi,\mathbf{p}) = \sum_{i\in R}\mathrm{pow}_i(\symbfupphi,\mathbf{p}) - \sum_{j\in R^c}&(\xi_1\mathrm{toer}_j(\symbfupphi,\mathbf{p}) \\&
+ \xi_2(\mathrm{toer}_j(\symbfupphi,\mathbf{p}) - \eta_2)
\mathbf 1(\mathrm{toer}_j(\symbfupphi,\mathbf{p}) - \eta_2) ),\end{align*}
as suggested in \cite{jiang_optimal_2021}.

\item Scenario-averaged versions of the above utility functions
\[ \bar u_l(\symbfupphi,\mathbf{p_2})=\sum_{\mathbf p\in\{\mathbf{p},\ldots\}}w_\mathbf{p}u_l(\symbfupphi, \mathbf{p},\mathbf{p_2})\quad\text{for }l=\text{ewp},\text{ecd (i.e.\ from \Cref{it:u-ewp} and \ref{it:u-ecd})},\]
and
\[ \bar{u_l}(\symbfupphi)=\sum_{\mathbf p\in\{\mathbf{p},\ldots\}}w_\mathbf{p}u_l(\symbfupphi, \mathbf{p})\quad\text{for }l=\text{2ewp},\text{2pow (i.e.\ from \Cref{it:u-2ewp} and \ref{it:u-2pow})},\]

where $\{\mathbf{p},\ldots\}$ is a set of scenarios of interest, e.g., the set of scenarios with number of strata $I$ from \Cref{sec:scenarios}, and $w_\mathbf{p}$ are weights with $\sum_\mathbf{p} w_\mathbf{p} = 1$, e.g., $w_\mathbf{p}=\frac{1}{\#\{\mathbf{p},\ldots\}}$ for all $\mathbf{p}$. 

\item Scenario-averaged utility functions with penalty of maximal TOER inflation
\[ \bar u_{l,\text{pen}}(\symbfupphi,\mathbf{p_2})= 
\begin{cases}\bar u_l(\symbfupphi,\mathbf{p_2}) & \text{if }\max_{\mathbf p, j} \mathrm{toer}_j(\symbfupphi,\mathbf{p}) < \eta_3,\\
-\xi_3\cdot\max_{\mathbf{p}, j} \mathrm{toer}_j(\symbfupphi,\mathbf{p})& \text{if }\max_{\mathbf p, j} \mathrm{toer}_j(\symbfupphi,\mathbf{p}) \geq \eta_3,
\end{cases}\]
for $l=$ ewp or ecd with the maximum defined across all $\mathbf p\in\{\mathbf{p},\ldots\}$ and $ j\in R^c$, and
\[ \bar{u}_{l,\text{pen}}(\symbfupphi)=\begin{cases}\bar u_l(\symbfupphi) & \text{if }\max_{\mathbf p, j} \mathrm{toer}_j(\symbfupphi,\mathbf{p}) < \eta_3,\\
-\xi_3\cdot\max_{\mathbf p, j} \mathrm{toer}_j(\symbfupphi)& \text{if }\max_{\mathbf p, j} \mathrm{toer}_j(\symbfupphi,\mathbf{p}) \geq \eta_3,
\end{cases} \]
for $l=$ 2ewp or 2pow, where we choose $\eta_3=0.2$  and $\xi_3=1000$. In order to make the penalty work, $\xi_3\cdot\eta_3$ should be greater than the absolute value $|\min \bar u_l |$.
\end{enumerate}

We call the first four functions single-scenario utility functions. The
scenario-averaged two-level stratum-wise power-error utility function
\(\bar u_\text{2pow}\) was suggested in \cite{jiang_optimal_2021},
except for the fact that the authors use all possible partitions with
respect to response rates whereas we allow arbitrary scenario sets. The
scenario-averaged expected number of correct decisions function
\(\bar u_\text{2pow}\) emulates the optimization algorithm from
\cite{baumann_basket_2023} which in turn took the algorithm from
\cite{broglio_comparison_2022}. There, the detection threshold
\(\lambda\) is first optimized to keep FWER with respect to the global
null scenario below a threshold and then the mean ECD across all
scenarios is maximized subsequently.

The scenario-averaged utility functions with penalty of maximal TOER
inflation are promising as they present a good compromise between
research goals and regulatory requirements. In the context of borrowing
from an external data source, it is an established fact that power gains
using an external data source can only come at the price of type-I error
inflation \parencite[see][]{kopp-schneider_power_2020}. This shortcoming
is usually accepted as a lesser evil in the context of platform trials.
However, regulation may impose a constraint on maximal TOER inflation
per basket, which can be taken into account by implementing the very
harsh penalty \(\xi_3\).

\section{Optimization algorithms}
\label{sec:opt-algs}

For finding the optimal value (minimum or maximum) of a utility function
\(u(\cdot):\symbfupphi\mapsto u(\symbfupphi)\), we will consider the
optimization algorithms named in the following list. A brief explanation
of the functionality of each optimization algorithm is presented below
the list.

\begin{enumerate}
\item Bounded simulated annealing algorithm using the reflection approach for bounding the parameter space as suggested in \cite[Section 6 of][]{haario_simulated_1991} -- implemented in an R package developed for the purpose of this comparison study, using a start temperature of
\begin{enumerate}
\item $T_{\text{start}}=100$,
\item $T_{\text{start}}=10$,
\item $T_{\text{start}}=1$,
\end{enumerate}
and one function evaluation per temperature step.\footnote{Simulated annealing is an optimization algorithm inspired by the thermodynamic process of \enquote{annealing} in metal industry, hence the physical term \enquote{temperature}. Temperature is a transformation of the probability of jumping to a parameter state with lower utility with the hope of escaping local critical points in order to find global critical points. At the beginning of the run, temperature is relatively high and will then be decreased.}
\item Unbounded simulated annealing algorithm using the \enquote{return NA} approach for bounding the parameter space -- implemented in the R function \texttt{stats::optim()}, using the start temperature $T_{\text{start}}=10$ and one function evaluation per temperature step.
\item Differential evolution (DE) as implemented in the R package \texttt{metaheuristicOpt}, using a population size of 40, a scaling vector of 0.8 and a cross-over rate of 0.5.
\item Grey wolf optimizer (GWO) as implemented in the R package \texttt{metaheuristicOpt}, using a population size of 40.
\item  Constrained optimization by linear approximations (COBYLA) algorithm as implemented in the R package \texttt{nloptr}, using stopping tolerance of $10^{-6}$ in the parameter space of $\symbfupphi$ and a stopping tolerance of 0 in the value space of $u(\cdot)$.
\item Grid search algorithm, searching the set of all combinations of
\begin{align*}\lambda&\in\{0.2, 0.3, 0.4, 0.5, 0.6, 0.7, 0.8, 0.9\}\cup\{0.99\}\cup\{0.999\},\\
\varepsilon&\in\{0.0, 0.5, 1.0, 1.5, 2.0\}\cup\{5.0, 10.0, 15.0, 20.0, 25.0\},\text{ and}\\
\tau&\in\{0.0, 0.1, 0.2, 0.3, 0.4, 0.5, 0.6, 0.7, 0.8\}\cup\{1.0\}.\end{align*}
\end{enumerate}

We will allow each algorithm to run for up to 1000 function evaluations.
The grid of the grid search algorithm was chosen to use 1000 function
evaluations as well. In order to put focus on parts of the grid that
seemed most relevant in preliminary experiments while restricting to
grid dimensions \(10\times 10 \times 10\), we omitted some seemingly
less relevant parts: \(\lambda < 0.2\) (i.e.~test decision almost always
positive), high resolution for \(\varepsilon > 2\) (i.e.~sharing only
with high similarity) and \(\tau = 0.9\) (i.e.~almost no sharing,
similar to 0.8 and 1.0). If one of the simulated annealing algorithms,
GWO, DE, or the COBYLA algorithm shows poor convergence after 1000
evaluations and runtime permits it, we will conduct one more run with 20
000 function evaluations. With this number of iterations, preliminary
experiments with simulated annealing reached convergence, which were,
however, based on an R package unfit for the analysis.

We will end this chapter with a brief explanation of the functionality of
each optimization algorithm:

Simulated annealing, referred to as unbounded simulated annealing in
this publication, is a physics-inspired metaheuristic optimization
algorithm suggested by \cite{kirkpatrick_optimization_1983}. It builds
upon the Metropolis algorithm that is also used in Markov chain Monte
Carlo procedures. It is named simulated annealing as it mimics the
procedure of annealing in metallurgy: There, a metal product is heated
and then slowly cooled down in order to achieve a more homogeneous and
stable structure within the product. The algorithm starts at a
user-suggested or randomly selected initial parameter vector. In each
step, the simulated annealing algorithm randomly suggests a new
parameter vector near the old parameter vector. If the new vector has
better utility, The current parameter vector is updated to be the new
parameter vector. If it has worse utility, it is still replaced with a
probability proportional to current ``temperature''. At the beginning,
the temperature is high in order to allow the algorithm to escape local
optima. Following a prespecified temperature schedule, it is then slowly
cooled down in order to find the global optimum. This algorithm was
first shown to converge on finite parameter spaces in
\cite{geman_stochastic_1984}. Due to the finite nature of computer
memory, we can consider our parameter spaces as finite.

Bounded simulated annealing is a generalization of the simulated
annealing algorithm to hypercubes of the form
\(\prod_{i=1}^d [l_i, u_i]\) with lower and upper bounds \(l_i\) and
\(u_i\) considered as a subset of \(\mathbb R^d\). It works identically
to above-mentioned unbounded simulated annealing algorithm. Whenever a
suggested parameter vector's component \(\phi_i\) surpasses one boundary
\(l_i\) or \(u_i\), it is reflected along these boundaries until it lies
in the respective interval \([l_i, u_i]\). \cite{haario_simulated_1991}
suggested this modification and proved its convergence. This reflection
procedure's result can be calculated by a simple affine transformation
combined with division with remainder.

Constrained optimization by linear approximation (COBYLA) is an
optimization algorithm suggested by \cite{powell_direct_1994}. It
employs a \(d\)-dimensional simplex in order supply linear
approximations of the utility function without the need for calculating
derivatives. New simplex vertices are suggested by using these linear
approximations while dynamically adjusting search radius and punishment
for constraint violations. The algorithm's procedure is too complex to
be described in detail but can be found in above-mentioned reference.

Differential evolution (DE) is a metaheuristic optimization algorithm
that mimics genetic evolution
\parencite[see][for an overview]{das_differential_2011}. From a fixed
number of candidate vectors (``the population''), donor vectors are
generated by randomly adding the scaled differences of two vectors to a
third vector (``mutation''). Then, new candidates (``offspring'') are
generated by randomly replacing some vector components of the original
candidates with components of the donor vectors (``crossover'').
Finally, the new candidates replace the old candidates in the next
generation if they perform better or equal (``natural selection''). Many
improvements of this idea have been suggested as discussed by Das et al.

The grey wolf optimizer (GWO) is a metaheuristic optimization algorithm
inspired by the hunting behavior and social hierarchy of grey wolves
suggested by \cite{mirjalili_grey_2014}. The parameter space is searched
by a number of candidate vectors (``pack of wolves'') which are
following the direction of the three best candidates (``alpha, beta and
delta wolves''). Candidate vectors are allowed more random behavior in
the beginning (``searching for prey'') and are more strictly following
the best solutions in the end (``encircling the prey''). The algorithm
has further been improved
\parencite[see][]{nadimi-shahraki_improved_2021}, but this improved
version is not implemented in R.

Grid search is the conceptually simplest of the mentioned algorithms.
For each component of the parameter vector, the user specifies a set of
values of interest. Then, the algorithm simply searches all possible
combinations of values for the optimal value. Grid search can be
parallelized and is completely deterministic, but on the other hand, its
cost grows exponentially with the number of parameters. Its precision
will never be finer than the size of the mesh.

This choice of optimization algorithms is obviously not exhaustive of
the abundance of available optimization algorithms. We chose these
algorithms as they represent different approaches to optimization:
stochastic metaheuristics with inspirations from physics, genetics and
swarm behavior (simulated annealing, DE, and GWO, respectively),
non-linear programming (COBYLA), and naive deterministic approaches
(grid search). Both COBYLA and grid search were already applied to the
optimization of clinical trial designs, see \cite{kunzmann_adoptr_2021},
\cite{jiang_optimal_2021}, and \cite{baumann_basket_2023}. Stochastic
metaheuristics appear to be a good alternative as they have little
requirements to the \enquote{niceness} of the targeted utility function
and as they are often able to escape local minima. Availability of R
implementations was also a relevant criterion in the selection of
optimization algorithms.

\section{Comparison protocol}

The comparison study is divided into three parts with I. the goal of
comparing optimization algorithms, II. the goal of comparing utility
functions, and III. the goal of comparing the optimized parameter values
to the parameter values suggested in \cite{fujikawa_bayesian_2020} as
described above. For each part of the study, we will apply the ADEMP
scheme for describing simulation studies that was introduced in
\cite{morris_using_2019}. The ADEMP scheme was developed for describing
simulation studies of statistical methods. In Part I, the methods of
interest are optimization algorithms rather than statistical methods.
However, the ADEMP scheme could still be adapted to match the best
practices in benchmarking algorithms described in
\cite{beiranvand_best_2017}, namely
\enquote{clarifying the reason for benchmarking} (\emph{aim} in the
ADEMP scheme), selecting an appropriate test set
(\emph{data-generating mechanism} in ADEMP) and reporting comparative
measures of efficiency, reliability and quality of solution
(\emph{performance measures} in ADEMP). Some more sophisticated methods
from \cite{beiranvand_best_2017} such as the choice of an exhaustive
test problem set and the reporting of performance profile plots was
omitted as our algorithm comparison is a quite small case study rather
than a complete benchmarking of possible algorithms choices.

\subsection{Part I: Comparison of optimization algorithms}

In this first part of our comparison study, we will explore what
optimization algorithm is best suited for the utility-based optimization
approach. To this end, we will test the optimization algorithms on a
selection of utility functions and outcome scenarios. Judging from some
preliminary simulation attempts, it is expected that testing the
optimization algorithms on all utility functions will take too long to
be numerically feasible, see \Cref{sec:simsize} for a detailed
explanation.

\begin{enumerate}
    \item Aim: The goal of this part of the comparison study is to select the fastest among all reliable algorithms for optimizing the parameters of Fujikawa's basket trial.
    \item Test problems: We will consider the utility functions \emph{scenario-averaged two-level stratum-wise power-error function} $\bar u_ {\text{2ewp}}$ and \emph{scenario-averaged expected number of correct decisions function} $\bar u_{\text{ecd}}$ and will optimize the functions on one scenario set, namely the scenario set from \Cref{sec:scenarios} with $(I=4, n_i=20, p_0=0.15)$, yielding a total of two optimization test problems (two functions with one scenario set each). The deterministic algorithms will only be run once on each test problem, whereas the stochastic algorithms will be run $n_{\text{runs}}=50$ times on each test problem. The number of algorithm runs is justified in \Cref{sec:simsize} below. The seed of the first run will be 1856. As a start value for simulated annealing and COBYLA, we will choose $\symbfupphi_{\text{start}} = (\lambda_{\text{start}}, \varepsilon_{\text{start}}, \tau_{\text{start}})=(0.2, 0.5, 0)$, i.e. a test decision that is mostly positive and borrowing that takes place most of the time -- this will be suboptimal in most scenarios, as it very frequently commits a type-I error.
    \item Estimand/target: The target of each optimization algorithm is to find the optimal parameter combination with respect to a utility function as quickly as possible.
    \item Methods: We will compare the six optimization algorithms described in \Cref{sec:opt-algs}. Bounded simulated annealing will be tested with three different starting temperatures, resulting in a total of eight studied algorithms.
    \item Performance measures: The following comparative measures are of interest for comparing the different algorithms. Performance measures will be presented separately per test problem. If the performance measure was measured for each run of a stochastic algorithm, the measures will be summarized using mean, standard deviation, minimal and maximal values as appropriate. For selecting the best optimization algorithm, we will use the following approach: Across all test problems, we will calculate the mean performance measures. We will begin with internal reliability. Only optimization algorithms with an internal reliability of over 99\% will be considered for comparing external reliability. Only optimization algorithms with a success rate of over 99\% will be considered for comparing speed. Finally, the fastest of all the remaining optimization algorithms will be considered the best algorithm.
    \begin{enumerate}
        \item Efficiency: number of fundamental evaluations of the utility function, user CPU time, system CPU time, wall clock time, memory usage.
        \item Internal consistency: mean resulting optimal utility function value, marginal means of the optimal parameter vector components, the component's marginal sample standard deviations, the 95\%-confidence interval of the means assuming normality, minimal and maximal values of resulting function values and parameter vector components.
        \item External reliability: The true optimal solution is unknown, but the grid search algorithm will be used as a reference benchmark, as it is a deterministic algorithm that exhausts the whole parameter space, up to the grid's precision. The following performance measures will then be considered: success rate of delivering an optimal utility value greater than or equal to the grid search results, 95\%-confidence interval of the difference to the grid search result assuming normality, minimal and maximal difference to the grid search results.
    \end{enumerate}
    \item Reporting: We will provide tabular presentation of the performance measures. For the stochastic algorithms, Monte Carlo standard errors (MCSE) of the performance measures will be reported wherever estimating formulae of MCSE are known. In addition, we will use box plots to visualize performance measures of interest as appropriate. Furthermore, we will generate line plots showing the convergence of the simulated annealing runs, each line representing one run, with function evaluations on the x-axis and the utility function value or one parameter on the y-axis. Finally, we will generate a four-dimensional plot of the grid search run in order to visualize the shape of each utility function: the x-axis will represent the parameter $\varepsilon$, the y-axis the parameter $\tau$, plot facets will represent the parameter $\lambda$ and color will represent the utility function value $u(\lambda,\varepsilon,\tau)$.
\end{enumerate}

\subsection{Part II: Comparison of utility functions}

In the second part of the comparison study, we will explore which
utility function is best-suited for the optimization of Fujikawa's
basket trial.

\begin{enumerate}
\item Aim: The aim of this part of the comparison study is to find the utility function which achieves the best compromise between single-stratum power and EWP on the one hand and single-stratum TOER and FWER on the other hand.
\item Data: We use the fastest reliable algorithm found in Part I of the study to optimize the tuning parameters $\symbfupphi$ with respect to the utility functions of interest. If the algorithm is stochastic, a seed will be fixed to 899. The functions will be optimized for the seven scenario sets introduced in \Cref{sec:scenarios}. The scenario-averaged utility functions will be averaged across all scenarios in the respective scenario set. The single-scenario utility functions, i.e., $u_{\text{l}}(\symbfupphi,\mathbf{p})$ with $l=$ 2ewp or 2pow, will be optimized for the scenarios $\mathbf{p}=$ \enquote{2 of 3 active}, \enquote{2 of 4 active} and \enquote{4 of 8 active}, respectively. In the scenario sets with $(I=3, n_i=24, p_0=0.2)$ \parencite[from][]{fujikawa_bayesian_2020} and $(I=4, n_i=20, p_0=0.15)$ \parencite[from][]{baumann_basket_2023}, all utility functions can be calculated exactly up to the precision of numerical integration using the \texttt{baskexact} R package \parencite{baumann_baskexact_2023}. Hence, this part is not actually a simulation study. The number of runs to calculate the optimal results will be 1. However, for all other scenario sets, we rely on simulation for calculating the performance measures.
\item Estimands: The estimand of each utility function is the optimal parameter vector. By applying the optimization algorithm to the respective utility function, we will receive an optimal parameter vector. The optimal parameter vector should of course be optimal with respect to the respective utility function, but should also show satisfactory performance with respect to the performance measures mentioned below.
\item Methods: The methods of interest are the eight different utility functions mentioned in \Cref{sec:ut-fun}.
\item Performance measures: For $(I=3, n_i=24, p_0=0.2)$ and $(I=4, n_i=20, p_0=0.15)$, all of the performance measures mentioned below can be calculated exactly up to the precision of numerical integration using the \texttt{baskexact} R package, without the necessity of Monte Carlo simulation. For all other scenarios, performance measures will be calculated using Monte Carlo simulation as implemented in the \texttt{basksim} package. A parameter combination is always optimal \emph{with respect} to the scenario set, e.g. with respect to $(I=3, n_i=24,p_0=0.2)$. For each scenario from \Cref{sec:scenarios} the respective scenario set, the following performance measures will be reported:\footnote{Performance measures will even be reported if the optimal parameter combination was optimized for another scenario. For example, if $u_{\text{ewp}}(\symbfupphi,\mathbf{p_1},\mathbf{p_2})$ was optimized for $\mathbf{p_1}$ being the 2 of 4 strata active scenario and $\mathbf{p_2}$ being the global null scenario with four strata, then we will still report the performance measures for all scenarios from the set $(I=4, n_i=20, p_0=0.15)$ mentioned in \Cref{sec:scenarios}.} 
  \begin{enumerate}
  \item Marginal rejection rate of the local null hypothesis in each stratum, equivalent to TOER if the stratum is inactive and power if the stratum is active,
  \item FWER,
  \item EWP\footnote{The EWP is equal to the expected sensitivity in the terminology of \cite{krajewska_new_2021}. The FWER is equal to one minus the expected specificity in their terminology.},
  \item expected number of correct decisions,
  \item utility function value of all utility functions.
  Depending on the results, it may be difficult to select a clear \enquote{best choice} among the utility functions. Based on the results per scenario as well as pooled results across all scenarios, we will attempt to discuss advantages and disadvantages among the utility functions in order to suggest a \enquote{best practice}.
  \end{enumerate}
\item Reporting: Tabular reports of all performance measures will be provided. In addition, dot plots of the performance measures will be provided for each stratum count $I=3$, $4$ and $8$. On the x-axis, the scenarios will be sorted by the number of active strata followed by the scenarios with mixed true rates. On the y-axis, the respective performance measure will be plotted.
\end{enumerate}

\subsection{Part III: Comparison of optimal parameter combinations to Fujikawa's suggested parameter combination}

Part III is an addition to the methods studied in Part II. In addition
to the optimal parameter vectors obtained by utility optimization in
Part II, we will also calculate the same performance measures for the
parameter choice suggested in \cite{fujikawa_bayesian_2020}. There, the
shape parameter \(\varepsilon\), the similarity cutoff \(\tau\) and the
detection threshold \(\lambda\) are suggested to be either
\(\symbfupphi_{\text{Fuj (i)}} = (\lambda, \varepsilon, \tau) = (0.99, 2, 0)\)
or \(\symbfupphi_{\text{Fuj (ii)}} = (0.99, 2, 0.5)\).

\section{Further analyses}

In an exploratory fashion, we will consider two further aspects in our
comparison study. Firstly, Fujikawa's basket trial design could be
altered by replacing the Jensen-Shannon divergence \(\JSD\) by the
Hellinger distance \parencite[see e.g.][]{lecam_asymptotics_2000},

\[ \HLD(P, Q) = 1 - \int_{\mathcal X} \sqrt{P(x)Q(x)}\mu(dx),\]

which has the advantage that for two beta distributions, it can be
calculated from basic functions without the need for numerical
integration \parencite[see][]{sasha_answer_2012}:

\[ \HLD\left(\mathrm{Beta}(a_1,b_1),\mathrm{Beta}(a_2,b_2)\right) = 1 -\frac{B\left(\frac{a_1+a_2}{2},\frac{b_1+b_2}{2}\right)}{\sqrt{B(a_1,b_1)B(a_2,b_2)}}.\]

We will explore whether this replacement results in a speedup of the
design while at the same time maintaining a similar behavior compared to
Fujikawa's design.

Secondly, we will graphically investigate the effect of borrowing on the
maximal TOER in \(I = 2\) strata. One inactive basket will be kept at a
stable response rate of \(p_1 = 0.2\), while we increase the response
rate of the other basket from \(p_2 = 0.2\) to \(p_2 = 1\). We will then
plot the response rate of \(p_2\) on the x-axis and the TOER of basket 1
on the y-axis for different combinations of \(\varepsilon\) and
\(\tau\). The resulting curves may give an overview of how borrowing
affects the TOER in inactive strata.

\section{Justification of simulation size and runtime}
\label{sec:simsize}

For assessing the precision of the stochastic algorithms in Part I of
our comparison study, we decided to run the algorithms for a total of
\(n_{\text{runs}}=50\) times. In the following \Cref{sec:runtime}, we
will justify the choice of this simulation size by estimating the
study's run time. For the calculation of design operating
characteristics, we will use the \emph{baskexact} package, which
calculates exactly up to numerical imprecision whenever it is feasible.
However, we use the Monte Carlo-based R package \emph{basksim}
(\cite{baumann_basksim_2024}) for estimating design characteristics in
the case of large sample sizes and large stratum counts, as the
\emph{baskexact} package would take too long (see \Cref{sec:runtime} for
a more detailed explanation). We use \(n_{\text{MC}} = 1000\) for the
number of Monte Carlo-generated data sets in \emph{basksim}. The choice
of this number is justified in \Cref{sec:mcbasksim}.

For scenario sets \((I=3, n_i=24, p_0=0.2)\) and
\((I=4, n_i=20, p_0=0.15)\), Part II and Part III of our comparison
study are deterministic and precise up to the precision of numerical
integration. Therefore, we do not conduct several runs in these parts.
In other words, the simulation size for parts II and III equals 1. For
the other scenario sets, the calculation of performance measures is
based on Monte Carlo-generated data sets with \(n_{\text{MC}} = 1000\).

\subsection{Estimation of runtime}
\label{sec:runtime}

In order to estimate the duration of our simulation, we ran a small
pilot simulation: We executed the bounded simulated annealing algorithm
and the grid search with 1000 iterations each for optimizing the utility
function \(u_{\text{ewp}}(\symbfupphi,\mathbf{p_1}, \mathbf{p_2})\) with
respect to stratum counts \(I\), per-stratum sample sizes \(n_i\) and
the respective global null hypothesis \(\mathbf{p_2}\) using the R
packages \emph{baskexact} and \emph{basksim}. The resulting run times
are specified in \Cref{tab:runtime-pilot}.

\begin{table}[p]
\caption{Run time of 1000 iterations in a pilot study}
\centering
\begin{tabularx}{\textwidth}{lccXcr}
  \hline
\textbf{Algorithm} & $I$ & $n_i$ & \textbf{Scenario $\mathbf{p_1}$ }                                         & \textbf{Used package}         & \textbf{Run time }                  \\ 
  \hline
Simulated annealing & 3    & 24   & $(0.2, 0.2, 0.5)$                                                       & \emph{baskexact}   & $\SI{15.64}{\min}$ \\ 
                                 & 4    & 20  & $(0.15, 0.15, 0.4, 0.4)$                                            &\emph{baskexact}    & $\SI{80.24}{\min}$ \\ 
                                 & 8    & 15  & $(0.15, 0.15, 0.15, 0.15,\break 0.45, 0.45, 0.45, 0.45)$ & \emph{basksim}       & $\SI{60.12}{\min}$ \\ 
Grid search               & 3    & 24 & $(0.2, 0.2, 0.5)$                                                        & \emph{baskexact}   & $\SI{1.03}{\min}$  \\ 
                                 & 4    & 20 & $(0.15, 0.15, 0.4, 0.4)$                                             &\emph{baskexact}  & $\SI{4.52}{\min}$  \\
 & 8 & 15 & $(0.15, 0.15, 0.15, 0.15,\break 0.45, 0.45, 0.45, 0.45)$ & \emph{basksim}& $\SI{201.67}{\min}$\\ 
   \hline
\end{tabularx}
\label{tab:runtime-pilot}

\caption{Estimation of the total run time for Part I of the comparison study}
\begin{tabularx}{\textwidth}{lXl}
\toprule
Algorithms                            & Bounded simulated annealing with 3 starting temperatures, unbounded simulated annealing, DE, GWO (each of these six with $n_{\text{runs}}=50$ runs in order to investigate stochastic behavior), COBYLA, grid search &  \\
Utility functions                     & Scenario-averaged utility functions $\bar u_ {\text{2ewp}}$ and $\bar u_{\text{ecd}}$ \\
Scenarios                             & Seven scenarios with $(I=4, n_i=20, p_0=0.15)$                                                                                                                                                                                                          \\
Run time & $\SI{80.24}{\min}$ on a scenario with $(I=4, n_i=20, p_0=0.15)$ using the \emph{baskexact} package                                                                                                                                                         \\
Computation clusters                  & 20 computation kernels on the institute's RStudio server resulting in a speedup by a factor of 15                                                                                                                                                                                       \\
\midrule
Estimated total run time       & $(6\cdot n_{\text{runs}}+2)\cdot 2 \cdot 7 \cdot \SI{80.24}{\min} \cdot \frac{1}{15} = \SI{15.7}{\day}$      \\                                                                                   
\bottomrule
\end{tabularx}
\label{tab:runtime-part1}

\caption{Estimation of the total run time for Part II of the comparison study}
\begin{tabularx}{\textwidth}{lXl}

\toprule
Algorithms                            & Fastest reliable algorithm selected in Part I &  \\
Utility functions                     & Four single-scenario utility functions, four scenario-averaged utility functions, four scenario-averaged utility functions with maximal TOER penalty \\
Scenarios                             & One scenario per scenario set for the single-scenario utility functions; 4, 7, 9, 10, 11, 5, 4 scenarios per scenario set for the scenario-averaged utility functions  \\
Run time & $\SI{15.64}{\min}$ on a scenario with $(I=3, n_i=24, p_0=0.2)$, $\SI{80.24}{\min}$ on a scenario with $(I=4, n_i=20, p_0=0.15)$ using the \emph{baskexact} package,   $\SI{60.12}{\min}$ on a scenario with $(I=8, n_i=15, p_i=0.15)$ with the \emph{basksim} package, similar for large scenario sets                                                                                                                                                                                                                                                                                                           \\
Computation clusters                  & 1 computation kernel on the institute's RStudio server resulting in no speedup     \\
\midrule
Estimated total run time       & $1\cdot(4\cdot (\SI{15.64}{\min} + \SI{80.24}{\min} + \SI{60.12}{\min}) + (4+4)\cdot(4\cdot\SI{15.64}{\min} + 7\cdot\SI{80.24}{\min} + (9 + 10 + 11 + 5 + 4)\cdot\SI{60.12}{\min}))\cdot 1 = \SI{16.9}{\day}$      \\                                                                                   
\bottomrule
\end{tabularx}
\label{tab:runtime-part2}
\end{table}

The \emph{baskexact} package took \(\SI{80.24}{\min}\) to run for
\((I=4, n_i=20, p_0=0.15)\). This package calculates operating measures
combinatorically and the runtime is hence influenced by the number of
combinations, which is proportional to \((n_i)^I\). Therefore, it is
reasonable to assume that larger scenarios, e.g.~with
\((I=8, n_i=15, p_0=0.15)\) would have a longer duration by several
orders of magnitude more than \(\SI{80.24}{\min}\)
(\(\frac{15^8}{20^4}=c\cdot10^5\)), rendering any simulation infeasible.
This is the reason why we will calculate the utility function for the
other scenario sets with the R package \emph{basksim}, which applies
Monte Carlo simulation and can hence reduce run time. The execution of
\emph{basksim} used \(n_{\text{sim}}=1000\) Monte Carlo simulations in
each call to the package.

The grid search algorithm was faster than the simulated annealing
algorithm by more than a factor 15 even though both algorithms ran for
1000 iterations. This is due to the fact that we used parallelization on
20 workers on our institute's RStudio server for calculating the
performance measures on the grid. Due to the sequential stochastic
nature of metaheuristic optimization algorithms, parallelization of a
single simulated annealing, GWO or DE run is not possible. However, we
will be able to parallelize part of the \(n_{\text{runs}}=50\) simulated
annealing runs; hence we will also estimate a speedup by a factor 15 for
Part I. In \Cref{tab:runtime-part1}, we show a detailed explanation of
the run time of Part I. The estimated run time of 15.7 days is long.
However, it is still feasible while underlining the necessity to keep
the example set of utility functions as small as it is.

The run time of Part II of our study is shown in the following
\Cref{tab:runtime-part2}. The estimation of the benefit of
parallelization appears more complicated in this case: Should we
parallelize across the set of 12 utility functions or across the set of
up to 11 data scenarios? However, even without any parallelization, the
estimated run time is 16.9 days, which would be feasible.

Finally note that the grid search algorithm on \(I=8\) strata is slower
by a factor of 3 compared to simulated annealing. Grid search using the
\emph{basksim} package cannot use parallelization, as the \emph{basksim}
package already uses parallelization internally. However, it should
optimally have about the duration as the simulated annealing algorithm,
as both take the same number of iterations if executed sequentially. The
unwanted slowdown may be due to a suboptimal R implementation and will
be further investigated before execution of the comparison study.

\subsection{Monte Carlo standard error of the algorithms' precision in Part I}

In Part I of the comparison study, we are interested in the precision of
partly stochastic optimization algorithms. We use \(n_{\text{runs}}=50\)
to estimate this precision. How precise will our estimates be? The
marginal sample standard deviation of the components of the resulting
optimal parameter vectors is one of the most relevant performance
measures of Part I of the comparison study. Assuming a normal
distribution of the marginal optimal algorithm result components around
the true optimal vector components, we want to keep the standard
deviation of the marginal standard deviation of the vector components
below a reasonable level. An unbiased estimator of the standard error of
the sample standard deviation is given by

\[ \widehat{\mathrm{SD}}(s) = s \cdot \frac{\Gamma( \frac{n-1}{2} )}{ \Gamma(n/2) } \cdot \sqrt{\frac{n-1}{2} - \left( \frac{ \Gamma(n/2) }{ \Gamma( \frac{n-1}{2} ) } \right)^2 },\]

where \(s = \sqrt{ \frac{1}{n-1} \sum_{i=1}^{n} (X_i - \overline{X}) }\)
is the sample standard deviation
and\break \(\Gamma(z)=\int_0^\infty t^{z-1}e^{-t} dt\) is the gamma
function for \(z\in\mathbb{C}\) with positive real part
\(\mathfrak{R}(z)>0\), see Appendix, \Cref{sec:se-of-sd}, for details.
For a total of 50 algorithm runs, this means that we will achieve a
standard error of \(\widehat{\mathrm{SD}}(s)_{50} = 0.10127\cdot s\),
i.e., the sample standard deviation will be precise up to a standard
error of little more than 10\%. Regarding the already long run time,
this appears acceptable.

\subsection{Monte Carlo standard error of the design characteristics in parts I and II}
\label{sec:mcbasksim}

In parts I and II, we will estimate the design characteristics of basket
trial designs with \(I=8\) using the Monte Carlo-based package
\emph{basksim}. We want to estimate the stratum-wise power and TOER by
applying the design to \(n_{\text{MC}}=1000\) simulated data sets.
According to \cite[][Table 6]{morris_using_2019}, the Monte Carlo
standard error of a rejection rate estimate
\(\widehat{\mathrm{rate}}=\frac{1}{n_{\text{MC}}}\sum_{l=1}^{n_{\text{MC}}} \mathbf{1}(p_l\leq \alpha)\)
such as power and TOER is given by

\[ \widehat{\mathrm{SD}}(\widehat{\mathrm{rate}})=\sqrt{\frac{\widehat{\mathrm{rate}}\cdot(1-\widehat{\mathrm{rate}})}{n_{\text{MC}}}}.\]

The value is maximal for \(\widehat{\mathrm{rate}} = 0.5\), resulting in
a Monte Carlo standard error of
\(\widehat{\mathrm{SD}}(\widehat{\mathrm{rate}})\leq 0.016\) for
\(n_{\text{MC}}=1000\), i.e., a standard error of less than 2\%. This
seems acceptable regarding the long run time in Part I of our comparison
study.

\section{Discussion}

Utility functions are a feasible and objective way of combining
operating characteristics in clinical trials, which has proved useful in
different contexts. So far in the context of basket trials, optimization
is usually restricted to heuristic manual tuning of parameters as in
\cite{fujikawa_bayesian_2020} or to optimizing one characteristic of
interest (e.g.~expected number of correct decisions) while keeping
type-I error rate in one scenario controlled as in
\cite{broglio_comparison_2022} and \cite{baumann_basket_2023}. The
challenge of optimizing across multiple scenarios and the choice of
optimization algorithm is usually not discussed.
\cite{jiang_optimal_2021} suggested two types of utility functions but
their choice was also not compared to other functions. Hence, our
comparison study will fill a research gap in investigating both the
choice of utility functions and of optimization algorithms.

If it proves effective and feasible, the studied optimization framework
may lay the foundation for further research on optimizing basket trials,
be it extensions to unbalanced sample size in the strata, multi-stage
basket trials, and different Bayesian or frequentist basket trial
designs.

\section{Acknowledgements}
We would like to thank Lukas Baumann, Marietta Kirchner and Paul
Thalmann from the Institute of Medical Biometry at the Heidelberg
University Hospital, Germany, Norbert Benda from the German Federal
Institute for Drugs and Medical Devices (BfArM), and Carolin Herrmann
from Novo Nordisk, Denmark, for useful comments concerning this
protocol. Furthermore, we thank two anonymous reviewers for their
suggestions on how to improve the study protocol. This project is funded
by the German Research Organization (DFG) as part of the project
\emph{STOP OR GO}, grant KI 708/9-1.

\pagebreak
\printbibliography

\setcounter{section}{0}
\renewcommand\thesection{\Alph{section}}
\renewcommand\theHsection{A-\thesection}
\section{Appendix: Estimation of the standard error of the sample
standard deviation}
\label{sec:se-of-sd}
In the following, we present a calculation of the standard error of the sample standard deviation, following the explanation on StackExchange, \cite[see][]{macro_answer_2012}.

\begin{prop}
Consider $n$ random variables $X_i, i=1,\ldots,n$ which are independent identically normally distributed $X_i\sim N(\mu,\sigma^2)$. Furthermore, consider the usual consistent estimator of the sample standard deviation $s = \sqrt{\frac{1}{n-1}\sum_{i=1}^n(X_i-\bar X)}$ and let $\Gamma(z)=\int_0^\infty t^{z-1}e^{-t} dt$ be the gamma function for $z\in\mathbb{C}$ with positive real part $\mathfrak{R}(z)>0$. Then the following holds:
\begin{enumerate}
\item The expectation of the sample standard deviation is given by $E(s)=\frac{\sigma}{c_n}$ with the correction factor $c_n=\frac{\Gamma(\frac{n-1}{2})}{\Gamma(\frac{n}{2})}\sqrt{\frac{n-1}{2}}$.
\item $s_{\text{un}}=c_n\cdot s$ is a consistent and unbiased estimator of $\sigma$.
\item $\widehat{SD}(s)=s\cdot\sqrt{1-\frac{1}{(c_n)^2}}$ is a consistent estimator of $SD(s)$.
\item $\widehat{SD}_{\text{un}}(s)=c_n\cdot s\cdot\sqrt{1-\frac{1}{(c_n)^2}}= s\cdot\frac{\Gamma(\frac{n-1}{2})}{\Gamma(\frac{n}{2})}\sqrt{\frac{n-1}{2} - \left( \frac{\Gamma(\frac{n}{2})}{\Gamma(\frac{n-1}{2})}\right)^2}$.
\end{enumerate}
\end{prop}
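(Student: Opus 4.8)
The plan is to reduce the whole proposition to one distributional fact and one explicit moment computation. Since the $X_i$ are i.i.d.\ $N(\mu,\sigma^2)$, the statistic $W := (n-1)s^2/\sigma^2$ is $\chi^2_{n-1}$-distributed, so $s = \sigma\sqrt{W/(n-1)}$ and the only quantity that needs real work is $E(\sqrt{W})$. I would compute it directly from the $\chi^2_{n-1}$ density $f(w) = \frac{1}{2^{(n-1)/2}\Gamma((n-1)/2)}\,w^{(n-1)/2-1}e^{-w/2}$: after the substitution $u = w/2$, the integral $\int_0^\infty w^{1/2}f(w)\,dw$ becomes a Gamma integral and yields $E(\sqrt{W}) = \sqrt{2}\,\Gamma(n/2)/\Gamma((n-1)/2)$. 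Item~1 then follows immediately, because $E(s) = \frac{\sigma}{\sqrt{n-1}}E(\sqrt{W}) = \sigma\sqrt{\tfrac{2}{n-1}}\,\Gamma(n/2)/\Gamma((n-1)/2)$, and this is exactly $\sigma/c_n$ with $c_n = \frac{\Gamma((n-1)/2)}{\Gamma(n/2)}\sqrt{(n-1)/2}$.

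For item~2, unbiasedness of $s_{\text{un}} = c_n s$ is immediate from item~1. For consistency I would combine $s^2 \xrightarrow{p} \sigma^2$ --- from the law of large numbers applied to the squared deviations (up to the $n/(n-1)$ factor) and the continuous mapping theorem --- with $c_n \to 1$, the latter from the standard asymptotic $\Gamma(x+a)/\Gamma(x+b) \sim x^{a-b}$ taken at $x = (n-1)/2$, $a = 0$, $b = 1/2$. Hence $c_n s \xrightarrow{p} \sigma$. For item~3 I would use that $s^2$ is unbiased for $\sigma^2$, so $\mathrm{Var}(s) = E(s^2) - (E\,s)^2 = \sigma^2(1 - 1/c_n^2)$ and therefore $SD(s) = \sigma\sqrt{1 - 1/c_n^2}$; replacing the unknown $\sigma$ by the consistent estimator $s$ gives $\widehat{SD}(s) = s\sqrt{1 - 1/c_n^2}$, whose consistency again follows from $s \xrightarrow{p} \sigma$.

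Item~4 is pure algebra: $c_n\cdot\widehat{SD}(s) = s\sqrt{c_n^2 - 1}$, and substituting $c_n^2 = \big(\Gamma((n-1)/2)/\Gamma(n/2)\big)^2\,(n-1)/2$ and pulling the factor $\Gamma((n-1)/2)/\Gamma(n/2)$ out of the radical produces the stated closed form $s\,\frac{\Gamma((n-1)/2)}{\Gamma(n/2)}\sqrt{\frac{n-1}{2} - \big(\frac{\Gamma(n/2)}{\Gamma((n-1)/2)}\big)^2}$. I would also record, since the subscript suggests it, that $E\big(\widehat{SD}_{\text{un}}(s)\big) = c_n\sqrt{1 - 1/c_n^2}\,E(s) = \sigma\sqrt{1 - 1/c_n^2} = SD(s)$, so this version is in fact exactly unbiased.

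The result is classical, so there is no genuine obstacle. The only step requiring an actual computation is the moment $E(\sqrt{W})$ in the first paragraph. The one point that needs care is the meaning of ``consistent estimator of $SD(s)$'': since $SD(s)$ is itself $n$-dependent and tends to $0$ as $n \to \infty$, consistency here has to be read as convergence of the ratio $\widehat{SD}(s)/SD(s)$ to $1$ rather than as convergence to a fixed constant, and I would state this explicitly in the write-up.
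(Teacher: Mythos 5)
Your proof is correct, and for items 2--4 it follows essentially the same route as the paper: unbiasedness of $s_{\text{un}}$ by linearity from item 1, consistency from $c_n\to 1$ combined with $s\xrightarrow{p}\sigma$, the variance identity $\mathrm{Var}(s)=E(s^2)-(Es)^2=\sigma^2\bigl(1-\tfrac{1}{c_n^2}\bigr)$ for item 3, and pure algebra for item 4. The one place you genuinely diverge is item 1: the paper outsources it entirely to a citation (Holtzman, 1950), whereas you derive it from the fact that $(n-1)s^2/\sigma^2\sim\chi^2_{n-1}$ and an explicit Gamma-integral computation of $E(\sqrt{W})=\sqrt{2}\,\Gamma(n/2)/\Gamma((n-1)/2)$; your computation is correct and makes the argument self-contained. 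Your closing remark about the meaning of ``consistent estimator of $SD(s)$'' is also a genuine improvement over the paper's treatment: the paper justifies consistency in item 3 by noting that $\sqrt{1-1/c_n^2}\to 0$, which only shows that both $\widehat{SD}(s)$ and $SD(s)$ tend to zero --- a vacuous form of consistency for a target that itself vanishes. Reading consistency as $\widehat{SD}(s)/SD(s)\xrightarrow{p}1$, as you propose, is the substantive statement, and it follows immediately from $s/\sigma\xrightarrow{p}1$ since the deterministic factors cancel in the ratio. Your added observation that $\widehat{SD}_{\text{un}}(s)$ is exactly unbiased for $SD(s)$ is likewise correct and sharpens what the paper asserts.
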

\begin{proof}
The proof of 1. can be found in \cite{holtzman_unbiased_1950}. The proof of 2.\ is a straightforward corollary of 1.\ using the linearity of expectation for proving unbiasedness and the fact that $\lim_{n\to\infty}c_n=1$ \parencite[see][]{laforgia_further_1984} together with Slutsky's theorem for proving consistency.

Straightforward calculation using 1.\ shows that 
\[SD(s)=\sqrt{\mathrm{Var}(s)}=\sigma\cdot\sqrt{1-\frac{1}{(c_n)^2}}.\]
Using the fact that $s$ is a consistent estimator of $\sigma$ and the fact that $\lim_{n\to\infty}\sqrt{1-\frac{1}{(c_n)^2}}=0$, this calculation implies 3. Finally, 4.\ follows from 2.\ and 3 using linearity of expectation for proving unbiasedness and once again the fact that $\lim_{n\to\infty}\sqrt{1-\frac{1}{(c_n)^2}}=0$ for proving consistency.
\end{proof}

\paragraph*{Correspondence}
Lukas D Sauer, Institute of Medical Biometry, Heidelberg University, Im
Neuenheimer Feld 130.3, 69120 Heidelberg, Germany. E-mail:
\href{mailto:sauer@imbi.uni-heidelberg.de}{\nolinkurl{sauer@imbi.uni-heidelberg.de}}

\end{document}